\theoremstyle{remark}
\newtheorem{theorem}{Theorem}
\newtheorem{lemma}{Lemma}
\newtheorem{remark}{Remark}
\newcommand{\rmT}{\rm{T}}
\newcommand{\dof}{{\rm DoF}}
\newcommand{\rmH}{\rm{H}}
\newcommand{\rmvec}{{\rm{vec}}}
\newcommand{\SNR}{\rm{SNR}}
\newcommand{\bfH}{\mathbf{H}}
\newcommand{\bfA}{\mathbf{A}}
\newcommand{\bfa}{\mathbf{a}}
\newcommand{\bfC}{\mathbf{C}}
\newcommand{\bfc}{\mathbf{c}}
\newcommand{\bfI}{\mathbf{I}}
\newcommand{\bfR}{\mathbf{R}}
\newcommand{\bfw}{\mathbf{w}}
\newcommand{\bfW}{\mathbf{W}}
\newcommand{\mcD}{\mathcal{D}}
\newcommand{\mcC}{\mathcal{C}}
\newcommand{\logtwo}{\log_{2}}
\newcommand{\kdragger}{k^{\dagger}}
\newcommand{\lb}{\left(}
\newcommand{\rb}{\right)}
\newcommand{\xtrehat}{\hat{x}_{\rm th}}
\newcommand{\invk}{\frac{1}{K}}
\newcommand{\Pout}{P_{\rm out}}
\newcommand{\NR}{N_{\rm{R}}}
\newcommand{\NT}{N_{\rm{T}}}
\newcommand{\be}{\begin{equation}}
\newcommand{\ee}{\end{equation}}
\newcommand{\ba}{\begin{array}}
\newcommand{\ea}{\end{array}}
\newcommand{\bdm}{\begin{displaymath}}
\newcommand{\edm}{\end{displaymath}}
\newcommand{\bea}{\begin{eqnarray}}
\newcommand{\eea}{\end{eqnarray}}
\newcommand{\bean}{\begin{eqnarray*}}
\newcommand{\eean}{\end{eqnarray*}}
\newcommand{\xtre}{x_{\rm th}}
\newcommand{\dt}{{\rm d}x}
\newcommand{\vast}{\bBigg@{3}}
\newcommand{\Vast}{\bBigg@{4}}
\def\[{\left[}
\def\]{\right]}
\begin{document}

\title{On the Degrees of Freedom for Opportunistic Interference Alignment with 1-Bit Feedback: The 3 Cell Case }
\author{
\IEEEauthorblockN{Zhinan Xu$^1$, Mingming Gan$^1$, Thomas Zemen$^{1,2}$}\\
\IEEEauthorblockA{$^1$ FTW (Telecommunications Research Center Vienna), Vienna, Austria\\
	$^2$AIT Austrian Institute of Technology, Vienna, Austria \\
Contact: xu@ftw.at}
}

\maketitle

\begin{abstract}
	Opportunistic interference alignment (OIA) exploits channel randomness and multiuser diversity by user selection. For OIA the transmitter needs channel state information (CSI), which is usually measured on the receiver side and sent to the transmitter side via a feedback channel. Lee and Choi show that $d$ degrees of freedom (DoF) per transmitter are achievable in a 3-cell MIMO interference channel assuming perfect real-valued feedback. However, the feedback of a real-valued variable still requires infinite rate.
	In this paper, we investigate 1-bit quantization for opportunistic interference alignment (OIA) in 3-cell interference channels. We prove that 1-bit feedback is sufficient to achieve the optimal DoF $d$ in 3-cell MIMO interference channels if the number of users per cell is scaled as ${\SNR}^{d^2}$. Importantly, the required number of users for OIA with 1-bit feedback remains the same as with real-valued feedback. For a given system configuration, we provide an optimal choice of the 1-bit quantizer, which captures most of the capacity provided by a system with real-valued feedback. Using our new 1-bit feedback scheme for OIA, we compare OIA with IA and show that OIA has a much lower complexity and provides a better rate in the practical operation region of a cellular communication system.

\end{abstract}
\begin{keywords} 
Opportunistic interference alignment, degrees of freedom, limited feedback, 1-bit feedback, IA.
\end{keywords} 

\section{Introduction}
Interference is a crucial limitation in next generation cellular systems. To address this problem, interference alignment (IA) has attracted much attention and has been extensively studied lately. IA is able to achieve the optimal degrees of freedom (DoF) at high signal-to-noise ratios (SNR) resulting in a rate of {$M/2\cdot {\rm{log}}({\rm{SNR}})+o({\rm{log}}({\rm{SNR}}))$ for the {$M$}} cell interference channel. For IA a closed-form solution of the precoding vectors for single antenna nodes with symbol extension is known \cite{Cadambe2008}. However, this coding scheme is based on the assumption that global channel state information (CSI) is available at all nodes, which is extremely hard to achieve and maybe even impossible. An iterative IA algorithm is proposed in \cite{Gomadam2011} to find the precoding matrices numerically with only local CSI at each node exploiting channel reciprocity. However, a number of iterations involving singular value decompositions (SVDs) have to be conducted which greatly increases the computational complexity. 

\subsection{Related Work}
For IA, CSI feedback has been investigated in \cite{Bolcskei2009,Krishnamachari2013,Xu2014a,Ayach2012a}. In \cite{Bolcskei2009}, channel coefficients are quantized using a Grassmannian codebook for frequency-selective single-input single-output (SISO) channels. The work in \cite{Krishnamachari2013} and \cite{Xu2014a} extends the results to multiple-input multiple-output (MIMO) channels and time-variant SISO channels respectively. The results in \cite{Bolcskei2009,Krishnamachari2013,Xu2014a} show that the full DoF is achievable as long as the feedback rate is high enough (which scales with the transmit power). Instead of quantizing the CSI, \cite{Ayach2012a} considers analog feedback and shows that the DoF of IA can be preserved as long as the forward and reverse link SNRs scale together. As the number of feedback bits increases, however, complexity increases and limited feedback becomes less practical due to undesirably large codebooks.

{For the sake of complexity reduction, opportunistic interference alignment (OIA) has been studied lately \cite{Yang2013,Jung2012,Gou2012,Lee2013,Yang2014,Lee2013b}. The key idea of OIA is to exploit the channel randomness and multiuser diversity by proper user selection. In \cite{Yang2013,Jung2012,Gou2012,Lee2013,Yang2014,Lee2013b}, signal subspace dimensions are used to align the interference signals. Each transmitter opportunistically selects and serves the user whose interference channels are most aligned to each other. The degree of alignment is quantified by a metric. To facilitate a user selection algorithm, all potential users associated with the transmitter are required to calculate and feedback the metric value based on the local CSI. Perfect IA can be achieved asymptotically if the number of users scales fast enough with SNR. The corresponding user scaling law to obtain the optimal DoF is characterized for multiple access channels in \cite{Jung2012,Yang2013} and for downlink interference channels in \cite{Lee2013,Yang2014,Lee2013b}. 
	
The work in \cite{Lee2013} decouples a multiple-input multiple-output (MIMO) interference channel into multiple SIMO interference channels and guarantees each selected user with one spatial stream. Since each stream is associated with one metric value, therefore multiple metric values have to be fed back from each user. The work of \cite{Yang2014} reduces the number of users to achieve the optimal DoF at the expense of increased feedback information from each user. In \cite{Yang2014}, each user has to feed back a metric value and a channel vector to cancel intra-cell interference. To enable multiple spatial streams for each selected user, the authors of \cite{Lee2013b} investigate the required user scaling in 3-cell MIMO interference channels and show that the optimal DoF $d$ is achieved if the number of users $K$ is scaled as $K \propto {\rm{SNR}}^{d^{2}}$. Therefore, at higher SNR, a larger number of users is required to achieve the optimal DoF. Clearly, the level of required total CSI feedback also increases proportionally to the number of users. However, in practical systems, the feedback is costly and the bandwidth of the feedback channel is limited. As a result, the feedback rate should be kept as small as possible.}
 
For opportunistic transmission in point-to-point systems, the problem of feedback reduction is tackled in \cite{Gesbert2004,Sharif2005,Sanayei2007a} by selective feedback. The solution is to let the users threshold their receive SNRs and notify the transmitter only if their SNR exceeds a predetermined threshold. The work in \cite{Gesbert2004,Sharif2005} reduces the number of real-valued variables that must be fed back to the transmitter in SISO and MIMO multiuser channels respectively. But \cite{Gesbert2004,Sharif2005} do not directly address the question of feedback rate since transmission of real-valued variables requires infinite rate. The work in \cite{Sanayei2007a} investigates the performance of opportunistic multiuser systems using limited feedback and proves that 1-bit feedback per user can capture a double-logarithmic capacity growth with the number of users. Note that \cite{Gesbert2004,Sharif2005,Sanayei2007a} consider {\em interference-free point-to-point transmissions.}

Unlike point-to-point systems where the imperfect CSI causes only an SNR offset in the capacity, the accuracy of the CSI in interference channels affects {\em the slope of the rate curve}, i.e., the DoF. Thus, for OIA, a relation to the DoF using selective feedback is critical. Can we reduce the amount of feedback and still preserve the optimal DoF? This is addressed in our paper \cite{Xu2015} using real-valued feedback. It shows that the amount of feedback can be dramatically reduced by more than one order of magnitude while still preserving the essential DoF promised by conventional OIA with perfect real-valued feedback. However, to the best of our knowledge, the achievability of the optimal DoF with limited feedback is still unknown.\footnote[1] {We are interested in limited feedback for the metric value. The work of \cite{Yang2014} addresses limited feedback to quantize a channel vector, which is not relevant to our work.} Our previous work \cite{Xu2015a} tackles this problem by 1-bit feedback to achieve the DoF $d=1$. This paper generalizes the results of \cite{Xu2015a} also to the cases of $d>1$.

\subsection{Contributions}
In this paper, we consider 1-bit feedback for 3-cell MIMO interference channels. 
\begin{itemize}
\item We prove that only 1-bit feedback per user is sufficient to achieve the full DoF (without requiring more users than real-valued feedback) if the one-bit quantizer is chosen judiciously.
\item We derive the scheduling outage probability according to the metric distribution for 1-bit feedback. 
\item We provide an optimal choice of the 1-bit quantizer to achieve the DoF of 1, which captures most of the capacity provided by a system with real-valued feedback. To achieve a DoF $d>1$, an asymptotic threshold choice is given by solving an upper bound for the rate loss.
\item The DoF achievable threshold is not unique. We generalize the design of the threshold choices and provide the mathematical expression.
\item We compare OIA and IA with the same amount of feedback and present the comparison in terms of complexity and achievable rate. We show that OIA has a much simpler quantizer and provides a higher sum rate in the practical operation region of a cellular communication system.
\end{itemize}

\subsection{Organization}
The rest of the paper is organized as follows. In Section \ref{sec2}, we introduce the system model of OIA. Section \ref{sec3} provides the background, the achievable DoF and user scaling law for conventional OIA. Section \ref{sec4} describes the proposed 1-bit feedback scheme and derives the optimal and asymptotic optimal choices for the 1-bit quantizer. The numerical results are provided in Section \ref{sec5}. In Section \ref{sec6}, we give a comprehensive comparison between IA with limited feedback and OIA with 1-bit feedback. Finally, 
we conclude the paper in Section \ref{sec7}.

\subsection{Notations}
We denote a scalar by $a$, a column vector by $\bfa$ and a matrix by $\bfA$. The superscript $^{\rmT}$ and $^{\rmH}$ stand for transpose and Hermitian transpose, respectively. The notations $\left\| \cdot \right\|$, $\left\| \cdot \right\|_{\rm F}$, $\rmvec(\cdot)$, $\det(\cdot)$, $\lceil{\cdot}\rceil$ and $\mathbb{E}[\cdot]$ denote vector 2-norm, Frobenius norm, vecterization, determinant, ceiling operation and the expectation operation, respectively. $\bfI_N$ is the $N \times N$ identity matrix. For a given function $f(N)$, we write $g(N)=O(f(N))$ if and only if $\lim_{N \rightarrow \infty}|g(N)/f(N)|$ is bounded and $g(N)=o(f(N))$ if and only if $\lim_{N \rightarrow \infty}|g(N)/f(N)|=0$. 
$\log$ is the natural logarithm function. 
\section{System model} \label{sec2}
Let us consider the system model for the $3$-cell MIMO interference channel, as shown in Fig.~\ref{MU_MIMO}. It consists of 3 transmitters with $N_{\rm{T}}$ antennas, each serving $K$ users with $N_{\rm{R}}$ antennas. The channel matrix from transmitter $j$ to receiver $k$ in cell $i$ is denoted by $\mathbf{H}_{i,j}^{k} \in \mathbb{C}^{N_{\rm{R}} \times N_{\rm{T}}}$, $\forall i,j \in \{ 1,2,3\}$ and $k \in \{ 1,\ldots,K\}$. Every element of $\mathbf{H}_{i,j}^{k}$ is assumed as an independent identically distributed (i.i.d.) symmetric complex Gaussian random variable with zero mean and unit variance.

For a given transmitter, its signal is only intended to be received and decoded by a single user for a given signaling interval. The signal received at receiver $k\in\{1,\ldots, K\}$ in cell $i$ at a given time instant is the superposition of the signals transmitted by all three transmitters, which can be written as
\begin{align}
\mathbf{x}_{i}^{k}=\mathbf{H}_{i,i}^{k}\mathbf{s}_i+{\sum_{j=1,j{\neq}i}^{3}\mathbf{H}_{i,j}^{k}\mathbf{s}_j}+\mathbf{n}_{i}^{k}, \label{sysmodel}
\end{align}
where vector $\mathbf{s}_j \in \mathbb{C}^{d \times 1}$ denotes $d$ transmitted symbols from transmitter $j$ with power constraint $\mathbb{E}\{\mathbf{s}_j\mathbf{s}_j^{\rm{H}}\}=\frac{P}{d}\mathbf{I}_d$.  The additive complex symmetric Gaussian noise $\mathbf{n}_i^{k} \sim \mathcal{C} \mathcal{N}(0, \mathbf{I}_{\NR})$ has zero mean and unit variance. Thus, the SNR becomes ${\rm SNR}=P$. In this paper, we confine ourselves to the case of $N_{\rm{R}}=2d$ and $N_{\rm{T}} = d$. This is interesting because it is the minimum setup  to achieve the full DoF $d$ at each receiver. 
In case the number of receive antennas $N_{\rm{R}}>2d$, $N_{\rm{R}}-2d$ DoF can be obtained with probability one even without interference management because uncoordinated interference signals will span a subspace with a maximum of $2d$ dimensions in the space $\mathbb{C}^{N_{\rm{R}}}$. On the other hand if $N_{\rm{R}}<2d$, the full DoF $d$ is not achievable because the interference signals will span at least a $d$ dimensional subspace even when they are perfectly aligned. The model in (\ref{sysmodel}) is statistically equivalent to the case when $N_{\rm{T}} \geq d$ and a linear precoding matrix $\mathbf{V}_j \in \mathbb{C}^{{N}_{\rm{T}} \times d}$ is applied to each transmitter as $\mathbf{x}_{i}^{k}=\mathbf{H}_{i,i}^{k}\mathbf{V}_i\mathbf{s}_i+{\sum_{j=1,j{\neq}i}^{3}\mathbf{H}_{i,j}^{k}\mathbf{V}_j\mathbf{s}_j}+\mathbf{n}_i^{k}$.

\begin{figure}[t]
	\centering
	\includegraphics[width=0.6\columnwidth]{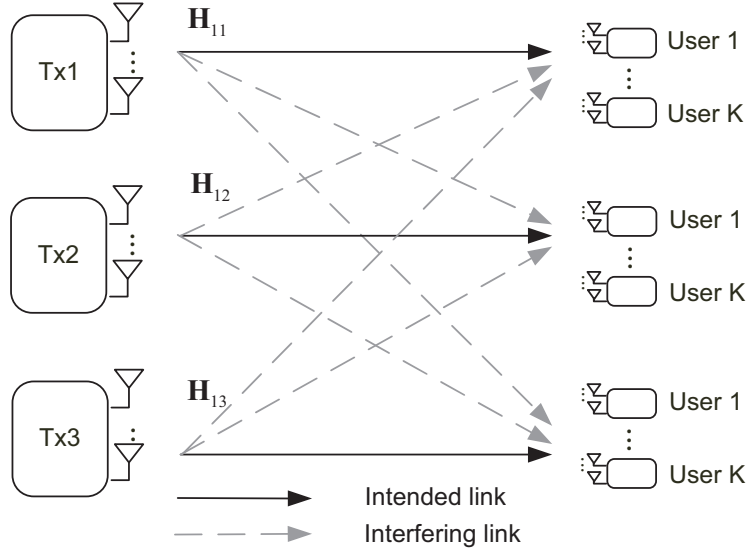}
	\centering
	\caption{Three-cell MIMO interference channel with $K$ candidates in each cell}
	\label{MU_MIMO}
\end{figure}

Defining ${\mathbf{U}_i^{k}} \in \mathbb{C}^{N_{\rm{R}} \times d}$ as the postfiltering matrix at receiver $k$ in cell $i$, the received signal of user $k$ in cell $i$ becomes
\begin{align}
\mathbf{y}_{i}^{k}=&{\mathbf{U}_i^{k}}^{\rm{H}} \mathbf{x}_{i}^{k}\nonumber\\
=&{\mathbf{U}_i^{k}}^{\rm{H}}{\mathbf{H}}_{i,i}^{k}\mathbf{s}_i+\sum_{j=1,j{\neq}i}^{3}{\mathbf{U}_i^{k}}^{\rm{H}}{\mathbf{H}}_{i,j}^{k}\mathbf{s}_j+{\bar{\mathbf{n}}}_{i}^{k}
\end{align}
where ${\bar{\mathbf{n}}}_{i}^{k}={\mathbf{U}_{i}^{k}}^{\rm{H}}\mathbf{n}_i^{k}$ denotes the effective spatially white noise vector. The achievable instantaneous rate for user $k$ in cell $i$ becomes

\begin{align}
R_{i}^{k}=& {\rm{log}}_2 {\rm{det}} \bigg( \mathbf{I}_{d}+\frac{P}{d} {\mathbf{U}_i^{k}}^{\rm{H}} {\mathbf{H}}_{i,i}^{k} 
{{{\mathbf{H}}}_{i,i}^{k}}^{\rmH}{\mathbf{U}_i^{k}} \Big( \frac{P}{d} \sum_{j=1,j{\neq}i}^{3} {\mathbf{U}_i^{k }}^{\rmH}{\mathbf{H}}_{i,j}^{k} 
{{{\mathbf{H}}}_{i,j}^{k}}^{\rmH}{\mathbf{U}_i^{k}} +\mathbf{I}_{d}\Big)^{-1}\bigg) \\
=&\underbrace{{\rm{log}}_2 {\rm{det}} \lb \mathbf{I}_{d}+\sum_{j=1}^{3}\frac{P}{d} {\mathbf{U}_i^{k}}^{\rm{H}} {\mathbf{H}}_{i,i}^{k} 
	{{{\mathbf{H}}}_{i,i}^{k}}^{\rmH}{\mathbf{U}_i^{k}} \rb}_{{R_{\rm gain}}_{i}^{k}} - \underbrace{{\rm{log}}_2 {\rm{det}} \lb \mathbf{I}_{d}+\sum_{j=1,j{\neq}i}^{3}\frac{P}{d} {\mathbf{U}_i^{k}}^{\rm{H}} {\mathbf{H}}_{i,i}^{k} 
	{{{\mathbf{H}}}_{i,i}^{k}}^{\rmH}{\mathbf{U}_i^{k}} \rb }_{{R_{\rm loss}}_{i}^{k}} \label{rategainloss}
\end{align}
where in (\ref{rategainloss}) we decompose the achievable rate into a rate gain term ${R_{\rm gain}}_{i}^{k}$ and a rate loss term ${R_{\rm loss}}_{i}^{k}$. Therefore, the DoF achieved for user $k$ in cell $i$ can be written as
\begin{align}
\dof_{i}^{k}=&\lim_{P\rightarrow\infty}  \frac{ {\mathbb E}[R_{i}^{k}]}{\logtwo P}\\
=& d- \underbrace{ \lim_{P\rightarrow\infty} \frac{{\mathbb E}[{R_{\rm loss}}_{i}^{k}]}{\logtwo P} }_{{\dof_{\rm loss}}_{i}^{k}}\label{dofeq}
\end{align}
where (\ref{dofeq}) is obtained due to $\lim_{P\rightarrow\infty} \frac{{\mathbb E}[{R_{\rm gain}}_{i}^{k}]}{\logtwo P}=d$. Therefore, in the rest of the paper, we will focus on the rate loss and DoF loss terms in order to analyze the achieved DoF.

\section{Conventional OIA} \label{sec3}
Without requiring global channel knowledge, OIA is able to achieve the same DoF as IA with only local CSI feedback within a cell. In this section, we describe the selection criteria and the design of the postfilter for the conventional OIA algorithm. The key idea of OIA \cite{Lee2013b} is to exploit the channel randomness and the multi-user diversity, using the following procedure: 
\begin{itemize}
	\item Each transmitter sends out a reference signal. 
	\item Each user equipment measures the channel quality using a specific metric.
	\item Every user feeds back the value of the metric to its own transmitter. 
	\item The transmitter selects a user in its own cell for communication according to the feedback values. 
\end{itemize}
We denote the index of the selected user in cell $i$ by $k^{\ast}$.
The transmitters aim at choosing a user, who observes most aligned interference signals from the other transmitters. The degree of alignment is quantified by a subspace distance measure, named chordal distance. It is generally defined as
\begin{eqnarray}
d_{\rm c}(\mathbf{A},\mathbf{B})=&1/\sqrt{2}\left\| \mathbf{A}\mathbf{A}^{\rm H}-\mathbf{B}\mathbf{B}^{\rm H} \right\|_{\rm F} \label{defdc}
\end{eqnarray}
where $\mathbf{A}$, $\mathbf{B} \in \mathbb{C}^{N_{\rm{R}} \times d}$ are the orthonormal bases of two subspaces and ${d_{\rm c}}^2(\mathbf{A}, \mathbf{B}) \leq d$. 
For OIA, each user finds an orthonormal basis $\mathbf{Q}$ of the column space spanned by the two interference channels respectively, i.e.,
${{{\mathbf{Q}}}_{ip}^{k}} \in {\rm span}({{\mathbf{H}}}_{ip}^{k})$ and ${{{\mathbf{Q}}}_{iq}^{k}} \in {\rm span}({{{\mathbf{H}}}_{iq}^{k}})$ where $p=(i+1 \mod 3)$ and $q=(i+2 \mod 3)$. Then the users calculate the distance between two interference subspaces using the obtained orthonormal basis, yielding 
\begin{align}
\mcD_{i}^{k}=d_{\rm c}^2({{{\mathbf{Q}}}_{ip}^{k}},{{{\mathbf{Q}}}_{iq}^{k}}),
\end{align}
where $\mcD_{i}^{k}$ is the distance measured at user $k$ in cell $i$. For conventional OIA, all users feed back the distance measure to their own transmitter and the user selected by transmitter $i$ is given by 
\begin{align}
k^{\ast}={\rm{arg~}} \underset{k}{\min ~} \mcD_{i}^{k}.
\end{align}
Therefore, the metric value of the selected user becomes $\mcD_{i}^{k^{\ast}}$. Defining the received interference covariance matrix of the selected user $k^{\ast}$ as
\begin{align}
\bfR_i^{k^{\ast}}= {\bfH}_{ip}^{k^{\ast}} {{\bfH}_{ip}^{k^{\ast}}}^{\rmH}+{\bfH}_{iq}^{k^{\ast}} {{\bfH}_{iq}^{k^{\ast}}}^{\rmH},
\end{align}
the postfilter applied at the selected user becomes
\begin{align}
\mathbf{U}_i^{k^{\ast}}=[\vec{\mathbf{u}}_{d+1}(\mathbf{R}_i^{k^{\ast}}),\cdots, \vec{\mathbf{u}}_{N_{\rm{R}}}(\mathbf{R}_i^{k^{\ast}})]
\end{align}
where $\vec{\mathbf{u}}_{n}(\bfR)$ represent the singular vector corresponding to the $n$-th largest singular value of $\bfR$.

\subsection{Achievable DoF of Conventional OIA}\label{TOIA}
As shown in \cite{Dai2008}, for quantizing a source $\bfA$ arbitrarily distributed on the Grassmannian manifold $\mathcal{G}_{\NR,d}(\mathbb{C})$ by using a random codebook $\mcC_{\rm rnd}$ with $K$ codewords, the second moment of the chordal distance can be bounded as
\begin{align}
Q(K)&=\mathbb{E} \[\underset{\bfC_{k}\in\mcC_{\rm rnd}}{\rm{min~}} d_{\rm c}^2(\bfA,\bfC_{k})\]&  \\
&\leq \frac{\Gamma(\frac{1}{d(\NR-d)})}{{d(\NR-d)}}{(K c_{\NR,d} )}^{-\frac{1}{{d(\NR-d)}}}
\end{align}
where $\Gamma(\cdot)$ denotes the Gamma function and the random codebook $\mcC_{\rm rnd}\subset\mathcal{G}_{\NR,d}(\mathbb{C})$. The constant $c_{\NR,d}$ is the ball volume on the Grassmannian manifold $\mathcal{G}_{\NR,d}(\mathbb{C})$, i.e.~
\begin{align}
c_{\NR,d}=\frac{1}{\Gamma(d(\NR-d)+1)}\prod_{i=1}^{d}\frac{\Gamma(\NR-i+1)}{\Gamma(d-i+1)}.
\end{align}
The problem of selecting the best user out of $K$ users is equivalent to quantizing an arbitrary subspace with $K$ random subspaces on the Grassmannian manifold $\mathcal{G}_{\NR,d}(\mathbb{C})$ \cite[Lemma 4]{Lee2013b}. Therefore, we have
$\mathbb{E} \[\mcD_{i}^{k^{}} \]=Q(1)$ and
$\mathbb{E} \[\mcD_{i}^{k^{\ast}} \]=Q(K)$.

We briefly revisit the results obtained in \cite{Lee2013b}, which will be used for comparison with our 1-bit feedback OIA.
A finite number of users $K$ results in residual interference. 
When the cell $i$ has $K$ users, the average rate loss at the selected user $k^{\ast}$ can be bounded as
\begin{align}
{\mathbb E}[{R_{\rm loss}}_{i}^{k^{\ast}}] \leq& d \cdot{\rm{log}}_2 \left(1+\frac{P}{d} \cdot \mathbb{E} [\mcD_{i}^{k^{\ast}}] \right) \label{rloss}\\
=& d \cdot{\rm{log}}_2 \left(1+\frac{P}{d} \cdot Q(K )\right)\label{rloss1},
\end{align}
where (\ref{rloss}) is obtained due to \cite[Lemma 6]{Lee2013b}.

The achievable DoF of transmitter $i$ using OIA can be expressed by $d-\lim_{P\rightarrow\infty} \frac{{\mathbb E}[{R_{\rm loss}}_{i}^{k^{\ast}}]}{\log_2 P}$. In order to achieve the DoF of $d'$, the number of users per cell has to be scaled as \cite[Theorem 2]{Lee2013b}
\begin{align}
K \propto P^{dd'}.
\end{align}

\section{The Achievable DoF of OIA with 1-Bit Feedback} \label{sec4}
In this section, we introduce the concept of 1-bit feedback for OIA. The achievability of the DoF is proven for $d=1$ first, where a closed-form solution exists. We generalize the result to all $d > 1$ based on asymptotic analysis.
\subsection{One-Bit Feedback by Thresholding}\label{SFT}
For conventional OIA, the user selected for transmission is the one with the smallest chordal distance measure. This requires that the transmitter collects the perfect real-valued chordal distance measures from all the users. However, the feedback of real values require infinite bandwidth. The question of how to efficiently feedback the required CSI is still not solved for OIA. To address this problem, we propose a threshold-based 1-bit feedback strategy where each user compares the locally measured chordal distance to a predefined threshold $\xtre$ and reports 1-bit information to the transmitter about the comparison. In such a way, the transmitter can partition all the users into two groups and schedule a user from the favorable group for transmission. Therefore, we propose the following steps for OIA using 1-bit feedback: 
\begin{itemize}
	\item Each transmitter sends out a reference signal. 
	\item Each user equipment measures the channel quality using the chordal distance measure. 
	\item Each user compares the locally measured chordal distance to a threshold. In case the measured value is smaller than the threshold, a '1' will be fed back; otherwise a '0' will be fed back. 
	\item The transmitter will randomly select a random user whose feedback value is '1' for transmission. 
\end{itemize}

A scheduling outage occurs if all users send '0' to the transmitter. In such an event, a random user among all users will be selected for transmission. To find the scheduling outage probability $\Pout$, we first denote the cumulative density function (CDF) of $\mcD_{i}^{k}$ by $F_\mcD(x)$, which is defined as
\begin{align}
F_\mcD(x)&={\rm Pr}(\mcD_{i}^{k} \leq x)\\
&={\rm Pr}(d_{\rm c}^2(\bfA,\bfC_{k}) \leq x)\\
&\approx
\begin{cases}
0, &  x<0\\
c_{\NR,d} \cdot x^{d(\NR-d)}, &  0\leq x \leq \hat{x} \label{cdff}\\
1, &  x>\hat{x}
\end{cases}
\end{align}
where $\hat{x}$ satisfies $c_{\NR,d} \cdot {\hat{x}}^{d(\NR-d)}=1$ and $\hat{x} \leq d$. If $d=1$, the CDF of (\ref{cdff}) becomes exact. If $d>1$, the CDF in (\ref{cdff}) is exact when $0 \leq x \leq 1$. When $1 < x < d$, the CDF provided by (\ref{cdff}) deviates from the true CDF \cite{Dai2008}. However, we are mainly interested in small $x <1 $ for the purpose of feedback reduction by thresholding. 

Therefore, the scheduling outage probability corresponds to the event where all $K$ users exceed $x$, which is denoted by
\begin{align} 
\Pout&={\rm Pr}(\underset{k}{\min ~} \mcD_{i}^{k} \geq x)\\
&={\rm Pr}(\underset{\bfC_{k}\in\mcC_{\rm rnd}}{\rm{min~}} d_{\rm c}^2(\bfA,\bfC_{k})\ \geq x)\\
&=\left(1-F_\mcD\left(\xtre\right)\right)^K.
\end{align}

We define the probability density functions (PDFs) of $\mcD_{i}^{k}$ as $f_\mcD(x)$, where $\int_{0}^{x} f_\mcD(x)\dt=F_\mcD(x)$.
In order to distinguish from the previous conventional OIA, we employ $k^{\dagger}$ as the index of the selected user with 1-bit feedback. The expected metric value of the selected user $k^{\dagger}$ can be expressed as

\begin{align}
	\mathbb{E} [\mcD_{i}^{k^{\dagger}}] =\left(1-\Pout\right) \int_{0}^{\xtre} \frac{ f_\mcD(x)x}{F_\mcD(\xtre)}\dt + \Pout \int_{\xtre}^{d} \frac{ f_\mcD(x) x}{1-F_\mcD(\xtre)} \dt,\label{avrd0}
\end{align}
where $\frac{ f_\mcD(x)}{F_\mcD(\xtre)}$ and $\frac{ f_\mcD(x)}{1-F_\mcD(\xtre)}$ are the normalized truncated PDFs of $\mcD_{i}^{k}$ in the corresponding intervals $[0,\xtre)$ and $[\xtre,d]$, satisfying

\begin{align}
	\int_{0}^{\xtre} \frac{ f_\mcD(x) \dt}{F_\mcD(\xtre)}=1 \quad\text{and} \quad \int_{\xtre}^{d}\frac{ f_\mcD(x) \dt}{1-F_\mcD(\xtre)}=1. 
\end{align}
The first term in (\ref{avrd0}) represents the event where at least one user falls below the threshold and reports '1' to the transmitter. The second term denotes a scheduling outage, where all the users exceed the threshold and report '0'.
 
\subsection{Achievable DoF and User Scaling Law When $d=1$} \label{deq1}
For a given $K$, $\Pout$ is uniquely determined by the choice of the threshold $\xtre$. We intend to find the optimal $\xtre$, such that (\ref{avrd0}) is minimized. The function is convex in the range of $[0,1]$. Thus, $\mathbb{E} [\mcD_{i}^{k^{\dagger}}]$ has an unique minimum within the interval $[0,1]$. To find the minimum value and the corresponding threshold, we need to solve the equation $\frac{\partial \mathbb{E} [\mcD_{i}^{k^{\dagger}}]}{\partial \xtre}=0$. For $d=1$, according to (\ref{cdff}) we have $F_\mcD(x)=x$ and $f_\mcD(x)=1$ in the interval $[0,1]$. The expected metric value $\mathbb{E} [\mcD_{i}^{k^{\dagger}}]$ in (\ref{avrd0}) can be simplified as
\begin{align}
D_i(\xtre)&=\mathbb{E}[\mcD_{i}^{k^{\dagger}}] \nonumber\\
&=\left(1-\Pout\right) \int_{0}^{\xtre} \frac{ x\dt}{\xtre} + \Pout \frac{\int_{\xtre}^{1} x \dt}{1-\xtre}\label{avrd} \nonumber\\
&=(1-(1-\xtre)^K) \frac{\xtre}{2}  +(1-\xtre)^K(\frac{1+\xtre}{2}).
\end{align}
The optimal $\xtre$ which minimizes $\mathbb{E}[\mcD_{i}^{k^{\dagger}}]$ can be found by solving $\frac{\partial D_i(\xtre)}{\partial \xtre}=0$, i.e.~$-K(1-\xtre)^{K-1}+1=0$.
Thus we have the optimal threshold 
\begin{align}
\xtrehat=1-(\invk)^{\frac{1}{K-1}} \label{opttre}. 
\end{align}
Applying $\xtrehat$ to (\ref{avrd}), the minimum of $D_i(\xtre)$ can be written as a function of $K$ as
\begin{align}
D_i(\xtrehat)=\frac{1}{2}\left(\invk \right)^{\frac{K}{K-1}}-\frac{1}{2}\left(\invk\right)^{\frac{1}{K-1}}+\frac{1}{2} \label{dmin}.
\end{align}
This leads us to the following lemma, which will then be used for the proof of the achievable DoF.
\begin{lemma} \label{eqpairlemma}
When the number of users $K$ goes to infinity, i.e.~$K \rightarrow \infty$, $D_i(\xtrehat)$ is asymptotically equivalent to $\frac{\log(K)}{2K}$, such that
\begin{align}
\lim_{K \rightarrow \infty} \frac{D_i(\xtrehat)}{\frac{\log K}{2K}}=1 \label{lem1}.
\end{align}
\end{lemma}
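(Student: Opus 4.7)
\medskip

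The plan is to reduce the stated limit to the elementary fact that $e^{-t}=1-t+O(t^2)$ as $t\to 0$, applied to the exponent $t=\log(K)/(K-1)$ which tends to zero as $K\to\infty$. The only algebraic manipulation required is to rewrite the closed-form expression (\ref{dmin}) in a shape that exposes this asymptotic cleanly.

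First I would introduce the shorthand $a_K:=(1/K)^{1/(K-1)}=\exp\!\left(-\tfrac{\log K}{K-1}\right)$, so that $(1/K)^{K/(K-1)}=a_K/K$. Substituting into (\ref{dmin}) gives the compact identity
\begin{align}
D_i(\xtrehat)=\tfrac{1}{2}(1-a_K)+\tfrac{a_K}{2K}. \label{DKshape}
\end{align}
This form makes the dominant contribution transparent: the term $\tfrac{a_K}{2K}$ is $O(1/K)$ and therefore negligible compared with the target rate $\log(K)/(2K)$, while $\tfrac{1}{2}(1-a_K)$ carries the whole asymptotic weight.

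Next I would compute the ratio in (\ref{lem1}). Dividing (\ref{DKshape}) by $\log(K)/(2K)$ yields
\begin{align}
\frac{D_i(\xtrehat)}{\log(K)/(2K)} = \frac{K\,(1-a_K)}{\log K} + \frac{a_K}{\log K}.
\end{align}
The second summand converges to $0$ since $a_K\to 1$ and $\log K\to\infty$. For the first summand I would use the Taylor expansion of the exponential around zero: with $t_K:=\log(K)/(K-1)\to 0$,
\begin{align}
1-a_K = 1 - e^{-t_K} = t_K + O(t_K^2) = \frac{\log K}{K-1} + O\!\left(\frac{(\log K)^2}{K^2}\right).
\end{align}
Multiplying by $K/\log K$ gives $K(1-a_K)/\log K = K/(K-1) + O(\log K/K)\to 1$, which establishes (\ref{lem1}).

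The only conceptual subtlety is ensuring the remainder term in the Taylor expansion is uniformly controlled; since $t_K\to 0$, the standard bound $|1-e^{-t}-t|\le t^2$ on any bounded neighbourhood of $0$ suffices, so I do not anticipate any real obstacle. The proof is essentially a one-line computation once (\ref{DKshape}) is in hand, and the main step worth emphasizing in the writeup is the rewriting that isolates the $\tfrac12(1-a_K)$ term as the asymptotically dominant contribution.
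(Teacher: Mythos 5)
Your proof is correct, and it takes a somewhat different computational route from the paper's. The paper works directly with the closed form (\ref{dmin}), first replacing the exponents $\tfrac{K}{K-1}$ and $\tfrac{1}{K-1}$ by their limits $1$ and $\tfrac{1}{K}$ (a step it does not explicitly justify), then substituting $M=1/K$ and applying L'H\^opital's rule to the resulting $0/0$ form. You instead regroup (\ref{dmin}) as $D_i(\xtrehat)=\tfrac12(1-a_K)+\tfrac{a_K}{2K}$ with $a_K=\exp(-\log K/(K-1))$, discard the manifestly $O(1/K)$ piece, and extract the leading behaviour of $1-a_K$ from the elementary expansion $1-e^{-t}=t+O(t^2)$ at $t_K=\log K/(K-1)\to 0$. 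Both arguments hinge on the same fact --- that $1-K^{-1/(K-1)}\sim \log K/K$ --- but your version isolates where the asymptotic weight sits, replaces L'H\^opital with an explicit remainder bound, and thereby also supplies the justification the paper's exponent-replacement step leaves implicit; the paper's version is shorter on the page but less transparent about error control. No gaps: your identity $(1/K)^{K/(K-1)}=a_K/K$ is correct, the second summand $a_K/\log K\to 0$ is handled properly, and the uniform bound $|1-e^{-t}-t|\le t^2$ near $0$ is all that is needed to conclude $K(1-a_K)/\log K=K/(K-1)+O(\log K/K)\to 1$.
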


\begin{proof}
Accroding to (\ref{dmin}), the left hand side of (\ref{lem1}) can be written as
\begin{align}
&\lim_{K \rightarrow \infty} \frac{  \left(\frac{1}{K} \right)^{\frac{K}{K-1}}-\left(\frac{1}{K}\right)^{\frac{1}{K-1}}+1}{\frac{\log K}{K}} \\
&=\lim_{K \rightarrow \infty} \frac{  \left(\frac{1}{K} \right)-\left(\frac{1}{K}\right)^{\frac{1}{K}}+1}{\frac{\log K}{K}} \label{KtoM} \\
&=\lim_{M \rightarrow 0} \frac{M^M(\log M+1)-1}{\log M+1} \label{LBD}\\
&=\lim_{M \rightarrow 0} M^M - \lim_{M \rightarrow 0} \frac{1}{\log M+1}\\
&=1 \nonumber
\end{align}
where (\ref{LBD}) is obtained by letting $M=1/K$ and applying the L'Hôpital's rule. Thus, the proof is complete.
\end{proof}


\begin{theorem} \label{themdeq1}
For $d=1$, if the number of users is scaled as $K \propto P^{d'}$, 1-bit feedback per user is able to achieve a DoF $d' \in [0,1]$ per transmitter if the the threshold is optimally chosen according to (\ref{opttre}).
\end{theorem}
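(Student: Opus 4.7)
The plan is to reduce Theorem \ref{themdeq1} to an asymptotic upper bound on the expected rate loss of the scheduled user, and then to substitute the scaling $K\propto P^{d'}$ using Lemma \ref{eqpairlemma}. First, I would observe that the derivation of the rate-loss bound in (\ref{rloss})--(\ref{rloss1}) from \cite[Lemma 6]{Lee2013b} consists of a per-realization inequality $R_{\rm loss}{}_i^{k}\leq d\log_2(1+\frac{P}{d}\mcD_i^{k})$ followed by Jensen's inequality on the concave logarithm. Since neither ingredient is sensitive to the user-selection rule, the bound transfers verbatim to the threshold-based pick $k^{\dagger}$. Specializing to $d=1$ with the optimal threshold $\xtrehat$ from (\ref{opttre}) then gives
\begin{align*}
\mathbb{E}[{R_{\rm loss}}_i^{k^{\dagger}}]\leq \log_2\!\left(1+P\cdot D_i(\xtrehat)\right).
\end{align*}

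Next I would substitute the asymptotic equivalence $D_i(\xtrehat)\sim \frac{\log K}{2K}$ supplied by Lemma \ref{eqpairlemma}. Replacing $K$ by $cP^{d'}$ for some constant $c>0$ yields $P\cdot D_i(\xtrehat)\sim \frac{d'}{2c}P^{1-d'}\log P$. Plugging this estimate into the DoF-loss definition (\ref{dofeq}), one is left to evaluate
\begin{align*}
{\dof_{\rm loss}}_i^{k^{\dagger}}\leq \lim_{P\to\infty}\frac{\log_2\!\left(1+\frac{d'}{2c}P^{1-d'}\log P\right)}{\log_2 P}.
\end{align*}
For $d'\in(0,1)$ the logarithm's argument is dominated by $P^{1-d'}$ up to a logarithmic factor, so the limit equals $1-d'$; for $d'=1$ the argument grows only as $\log P$, so the limit is $0$. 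In either case, combining with $\dof_i^{k^{\dagger}}=1-{\dof_{\rm loss}}_i^{k^{\dagger}}$ from (\ref{dofeq}) yields an achievable DoF of at least $d'$, which saturates the ceiling $d=1$ at $d'=1$.

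The main delicate step is the boundary behaviour at $d'=1$, where the dominant balance between $P^{1-d'}$ and $\log P$ flips qualitatively and the DoF-loss limit can no longer be read off by pure power counting; this forces a separate treatment in which the $\log\log P$-scale residual has to be absorbed once divided by $\log_2 P$. A secondary bookkeeping point is that the expectation of $\mcD_i^{k^{\dagger}}$ must be taken jointly over the channels and the uniform-at-random tie-breaking inside each of the two cases in (\ref{avrd0}), but this is precisely the quantity denoted $D_i(\xtre)$ in (\ref{avrd}), so no extra work is required beyond invoking Lemma \ref{eqpairlemma} at its minimiser $\xtrehat$.
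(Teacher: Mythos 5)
Your proposal is correct and follows essentially the same route as the paper: bound $\mathbb{E}[{R_{\rm loss}}_i^{k^{\dagger}}]$ via (\ref{rloss}) evaluated at $D_i(\xtrehat)$, invoke Lemma~\ref{eqpairlemma} to replace $D_i(\xtrehat)$ by $\frac{\log K}{2K}$, substitute $K\propto P^{d'}$, and check that the residual $\log\log P$ term vanishes after division by $\log_2 P$, exactly as in (\ref{dofupper})--(\ref{eqpair1}). Your explicit remark that the Jensen-based bound transfers unchanged to the threshold-based selection $k^{\dagger}$ is a point the paper uses silently, and your case split at $d'=1$ is subsumed by the paper's single expression $(1-d')+\lim_{P\to\infty}\frac{1}{\log P+O(1)}$.
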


\begin{proof}
The achievable DoF of transmitter $i$ using OIA can be expressed as $1-d_{\rm loss}$. If $K \propto P^{d'}$, the DoF loss term can be written as 
\begin{align}
d_{\rm loss}&=\lim_{P\rightarrow\infty} \frac{{\mathbb E}[{R_{\rm loss}}_{i}^{k^{\dagger}}]}{\log_2 P} \\
&\leq\lim_{P\rightarrow\infty} \frac{{\log}_2 \left(1+P D_i\left(\xtrehat\right)\right)}{\log_2 P} \label{dofupper}\\
&=\lim_{P\rightarrow\infty} \frac{{\log}_2 \left(P D_i\left(\xtrehat\right)\right)}{{\log}_2 P} \\
&=\lim_{P\rightarrow\infty} \frac{{\log}_2 \left(P \cdot \frac{\log K}{2K}\right)}{{\log}_2 P} \label{eqpair}\\
&=(1-d')+\lim_{P\rightarrow\infty} \frac{1}{{\log} P + O(1)} \label{eqpair1}\\
&=(1-d').
\end{align}
The inequality (\ref{dofupper}) is obtained by using the upper bound in (\ref{rloss}) and invoking (\ref{dmin}). Equality (\ref{eqpair}) is due to the asymptotic equivalence in Lemma \ref{eqpairlemma}. Equality (\ref{eqpair1}) is obtained using the relationship $K \propto P^{d'}$ and the L'Hôpital's rule. Therefore, the DoF $d'$ is obtained at each transmitter.
\end{proof}

\begin{remark}
Compared to conventional OIA in \cite{Lee2013b}, the user scaling law achieving DoF $d'$ remains the same. The second term in (\ref{eqpair1}) does not exist for conventional OIA. However, it goes to $0$ when $P \rightarrow \infty$, and thus does not change the DoF. Therefore, 1-bit feedback neither degrades the performance in terms of DoF nor requires more users to achieve the same DoF.
\end{remark}

\subsection{Achievable DoF and User Scaling Law When $d>1$} \label{dl1}
 Now we want to generalize the result to any $d$ values. However, for $d > 1$, a closed-form solution does not exist. In this section, we will base our investigation on asymptotic analysis. To ease the notation, we drop the dependence of $c_{\NR,d}$ on $d$ and let $\NR=2d$. First, we simplify (\ref{avrd0}) using the following upper bound
	\begin{align}
		&\mathbb{E} [\mcD_{i}^{k^{\dagger}}] \nonumber\\ 
		&=\left(1-\Pout\right) \int_{0}^{\xtre} \frac{ f_\mcD(x)x}{F_\mcD(\xtre)}\dt + \Pout \int_{\xtre}^{d} \frac{ f_\mcD(x) x}{1-F_\mcD(\xtre)} \dt \nonumber\\
		&\leq \left(1-\Pout\right) \xtre + \Pout d \label{upperbinterg}\\
		&= \xtre+(d-\xtre)(1-F_\mcD(\xtre))^K \\
		&= \xtre+(d-\xtre)(1-c{\xtre}^{d^2})^K \label{upeq41}
	\end{align}
	where (\ref{upperbinterg}) is obtained by taking the upper limit of the integration. To find the minimum value and the corresponding threshold, we need to solve the partial derivative of (\ref{upeq41}) with respect to $\xtre$, i.e.

	\begin{align}
		1-(1-c{\xtre}^{d^2})^K-cKd^{2}(d-\xtre){\xtre}^{d^{2}-1}(1-c{\xtre}^{d^2})^{K-1} =0.
	\end{align}
	where an explicit solution does not exist for $d>1$ to the best of our knowledge. 
	
	Therefore, instead of an explicit solution, we will find an asymptotically close solution. We simplify equation (\ref{upeq41}) by letting $y=c{\xtre}^{d^2}$, i.e.
	\begin{align}
		\mathbb{E} [\mcD_{i}^{k^{\dagger}}] &\leq \xtre+(d-\xtre)(1-c{\xtre}^{d^2})^K \nonumber \\
		&=\lb\frac{y}{c}\rb^{\frac{1}{d^2}}+\lb  d-\lb\frac{y}{c}\rb ^{{\frac{1}{d^2}}} \rb(1-y)^K \\
		&\leq(\frac{y}{c})^{\frac{1}{d^2}}+d \sum_{n=0}^{\infty}(-1)^n\binom{K}{n}y^n \label{tylor1}
	\end{align}
	where (\ref{tylor1}) is obtained by neglecting $(\frac{y}{c})^{{\frac{1}{d^2}}}$ in the second term and applying the Maclaurin series expansion to the following binomial function
	\begin{align}
		&(1-y)^K\nonumber\\
		&=1-Ky+\frac{K(K-1)y^2}{2!}\cdots+(-1)^{n}\frac{K\cdots(K-n+1)y^n}{n!}\nonumber\\
		&=\sum_{n=0}^{\infty}(-1)^n\binom{K}{n}y^n.
	\end{align}
	To proceed our proof, we give the following lemma.
	\begin{lemma} \label{eqpairlemmadl1}
		When the number of users $K$ goes to infinity, i.e.~$K \rightarrow \infty$, the binomial coefficient 
		\begin{align}
			\binom{K}{n}=\frac{K^n}{n!}\lb1+O\lb\frac{1}{K}\rb\rb.
		\end{align}
	\end{lemma}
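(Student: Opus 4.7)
The plan is to start from the falling-factorial form of the binomial coefficient and pull out the dominant power $K^n$, thereby reducing the statement to an asymptotic estimate of a short finite product. Concretely, I would write
\begin{align}
\binom{K}{n} = \frac{K(K-1)(K-2)\cdots(K-n+1)}{n!} = \frac{K^n}{n!}\prod_{i=0}^{n-1}\left(1-\frac{i}{K}\right),
\end{align}
so the lemma is equivalent to showing $\prod_{i=0}^{n-1}(1-i/K) = 1 + O(1/K)$ as $K\to\infty$ for fixed $n$.

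For this product I would use the Weierstrass product inequality, $\prod_{i=0}^{n-1}(1-a_i) \ge 1 - \sum_{i=0}^{n-1} a_i$ whenever all $a_i \in [0,1]$, applied with $a_i = i/K$ and $K > n$. Together with the trivial upper bound $\prod_{i=0}^{n-1}(1-i/K) \le 1$, this yields the sandwich
\begin{align}
1 - \frac{n(n-1)}{2K} \le \prod_{i=0}^{n-1}\left(1-\frac{i}{K}\right) \le 1,
\end{align}
which is exactly $1 + O(1/K)$. An equivalent and slightly more informative route is to simply multiply the product out and collect powers of $1/K$: the leading correction is $-n(n-1)/(2K)$, and all further terms are $O(1/K^2)$, with only finitely many of them since $n$ is held fixed. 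Either approach delivers the claim in one or two lines.

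The argument is essentially mechanical and I do not expect any real obstacle. The only point worth flagging is that the implicit constant in the $O(1/K)$ term depends on $n$ (it is on the order of $n^2$). This is harmless in the subsequent use of the lemma, where $n$ will play the role of the summation index in the Maclaurin series of $(1-y)^K$ rather than a quantity growing with $K$; nonetheless I would make the $n$-dependence of the constant explicit in the proof, so that when the bound is substituted back into the series in \eqref{tylor1} one has a clean term-by-term error control and can justify exchanging limit and summation later on.
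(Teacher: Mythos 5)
Your proof is correct and follows essentially the same route as the paper's: both expand the falling factorial $K(K-1)\cdots(K-n+1)$ and extract the leading $K^n$, differing only in that you bound the residual product $\prod_{i=0}^{n-1}(1-i/K)$ explicitly via the Weierstrass inequality where the paper merely collects the polynomial coefficients $c_i(n)$ and absorbs them into $O(1/K)$. Your remark that the implied constant grows like $n^2$ is a worthwhile addition, since the lemma is subsequently applied term by term inside an infinite sum over $n$, where uniformity of the $O(1/K)$ term is silently assumed by the paper.
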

	\begin{proof}
		By definition of $\binom{K}{n}$, we have
		\begin{align}
			\binom{K}{n} &= \frac{K!}{n!(K-n)!} \nonumber\\
			&=\frac{(K-n+1)(K-n+2)\cdots K}{n!} \label{binom1}
		\end{align}
		The numerator in (\ref{binom1}) can be expanded as
		\begin{align}
			&(K-n+1)(K-n-1)...K \nonumber\\
			&= K^n + c_1(n)K^{n-1} + c_2(n)K^{n-2} + \cdots + c_n(n) 
		\end{align}
		where $c_i(n)$ are polynomial functions dependent only on $K$. When $K \rightarrow \infty$, we can extract $K^n$ to obtain
		\begin{align}
			K^n(1 + \frac{c_1(n)}{K} + \frac{c_2(n)}{K^2} + \cdots + \frac{c_n(n)}{K^n}) =K^n\lb 1+O \lb\frac{1}{K}\rb \rb \nonumber
		\end{align}
		and thus $\binom{K}{n} = \frac{K^n}{n!}\lb 1+O\lb\frac{1}{K} \rb \rb$.
	\end{proof}
	Therefore, when $K \rightarrow \infty$, (\ref{tylor1}) can be written as
	\begin{align}
		\mathbb{E} [\mcD_{i}^{k^{\dagger}}] &\leq\lb\frac{y}{c}\rb^{\frac{1}{d^2}}+d \sum_{n=0}^{\infty}(-1)^n\binom{K}{n}y^n \nonumber\\
		&=\lb\frac{y}{c}\rb^{\frac{1}{d^2}}+d \lb 1+O\lb \frac{1}{K} \rb\rb \sum_{n=0}^{\infty}(-1)^n \frac{K^n y^n}{n!}  \label{eqlemma1}\\
		&=\lb\frac{y}{c}\rb^{\frac{1}{d^2}}+d\lb 1+O\lb \frac{1}{K} \rb\rb e^{-Ky}  \label{eqlemmaexp} \\
		&= \underbrace{ \lb\frac{y}{c}\rb^{\frac{1}{d^2}}+d e^{-Ky}}_{{\tilde{D}_{i} (y)}} \label{ublemmaexp1}
	\end{align}
	where (\ref{eqlemma1}) follows from lemma \ref{eqpairlemmadl1}. Equality (\ref{eqlemmaexp}) is obtained by utilizing the Maclaurin series expansion of the exponential function
	\begin{align}
		e^{-Ky}&=1-Ky+\frac{K^2 y^2}{2!}-\frac{K^3 y^3}{3!}+\cdots+(-1)^n \frac{K^n y^n}{n!} \nonumber\\
		&=\sum_{n=0}^{\infty}(-1)^n \frac{K^n y^n}{n!}.
	\end{align}
	Equality (\ref{ublemmaexp1}) is obtained by neglecting $O\lb \frac{1}{K} \rb$ due to the fact $K \rightarrow \infty$. 
	We define ${\tilde{D}_{i} (y)}$ as the upper bound obtained in (\ref{ublemmaexp1}). The $y$ which minimizes ${\tilde{D}_{i} (y)}$ is the solution to  
	\begin{align}
		\frac{\partial {\tilde{D}_{i} (y)}}{\partial y}=\frac{1}{d^2}\lb {\frac{y}{c}}\rb^{\lb \frac{1}{d^2}-1 \rb} - d K e^{-Ky}=0 \label{deriv1}. 
	\end{align}
	For ({\ref{deriv1}}), the real solutions should exist in $(0,\infty)$, which can be found by numerical approximation. However, for general $d$ (expect for $d=1$), an explicit solution is still mathematically intractable. The solver can be written in the form of the Lambert W function \cite{Corless1996}, which is a set of functions satisfying $W(z)e^{W(z)}=z$. To this end, we first rewrite (\ref{deriv1}) as
	\begin{align}
		\frac{K}{\alpha}y e^{\frac{K}{\alpha}y}=\frac{Kc\lb d^3K\rb^{\frac{1}{\alpha}}}{\alpha}
	\end{align}
	where $\alpha=\frac{1}{d^2}-1$. The possible real solutions to this equation are given by
	\begin{align}
		\hat{y}=\frac{\alpha \cdot W_{\zeta} \lb \frac{ Kc\lb d^3K\rb^{\frac{1}{\alpha}}}{\alpha}\rb}{K}, \zeta\in \{0,-1\} \label{solutionlambert},
	\end{align}
	where the function $W_{0}(\cdot)$ and $W_{-1}(\cdot)$ are two real branches of the Lambert W function defined in the intervals $[-\frac{1}{e},\infty)$ and $[-\frac{1}{e},0)$, corresponding to the maximum and minimum value of $\tilde{D}_{i} (y)$. We are interested in the minimum of ${\tilde{D}_{i} (y)}$ when $\zeta=-1$. 
	The Lambert W function $W_{\zeta}(z)$ is asymptotic to \cite{Corless1996}
	\begin{align}
		W_{\zeta}(z)=\log z +2\pi i\zeta - \log \lb \log z +2\pi i\zeta \rb + o(1).
	\end{align}
	Therefore, for $\zeta=-1$ and large $K \rightarrow \infty$, we arrive at an asymptomatic solution for $\hat{y}$, which is given by

	\begin{align}
		\hat{y}&=\frac{\alpha}{K}\Vast( \log \lb \frac{ Kc\lb d^3K\rb ^{\frac{1}{\alpha}}}{\alpha} \rb -2\pi i - 
		\log \lb \log \lb \frac{ Kc\lb d^3K\rb^{\frac{1}{\alpha}}}{\alpha} \rb -2\pi i \rb + o(1) \Vast) \\
		&=\frac{\alpha}{K}\Vast( \underbrace{ \log \lb \frac{ -Kc\lb d^3K\rb^{\frac{1}{\alpha}}}{\alpha} \rb}_{w(K)} -\log \log \lb \frac{- Kc\lb d^3K\rb^{\frac{1}{\alpha}}}{\alpha} \rb + o(1) \Vast) \label{solver0w}\\
		&=\frac{\alpha}{K} \lb w\lb K \rb  - o\lb w\lb K \rb \rb + o(1) \rb\label{solver1w}\\
		&=\frac{1}{K} \Bigg(  \underbrace{\lb\alpha+1\rb}_{A} \log K+  \underbrace{ \log {\lb d^3c^{\alpha}\rb}  -\alpha\log \lb-\alpha\rb - \alpha o\lb w(K) \rb +\alpha o(1)}_{B} \Bigg) \\
		&= \frac{1}{K} \lb A \log K + B\rb \label{upperAB}
	\end{align} 
	where $w(K)= \log \lb \frac{ -Kc\lb d^3K\rb^{\frac{1}{\alpha}}}{\alpha} \rb$, $A=\alpha+1$ and $B={ \log {\lb d^3c^{\alpha}\rb}  -\alpha\log \lb-\alpha\rb - \alpha o\lb w(K) \rb +\alpha o(1)}$. Equality (\ref{solver0w}) is obtained due to natural logarithm function of a negative value $m<0$ is $\log m=\log(-m)+2\pi i$. Equality (\ref{solver1w}) follows from the fact $\lim_{K \rightarrow \infty}=\frac{\log (w(K))}{w(K)}=0$. Therefore, the corresponding choice of a threshold that minimizes ${\tilde{D}_{i} (y)}$ can be calculated as
	\begin{align}
		\xtrehat&=\lb\frac{\hat{y}}{c}\rb^{\frac{1}{d^2}} \nonumber\\ 
		&=\lb \frac{ A \log K + B}{cK} \rb ^{\frac{1}{d^2}}.
	\end{align} 
	Using this results, we arrive at the following lemma, which will be used for the calculation of the achievable DoF.
	
	\begin{lemma} \label{eqpairlemma2}
		If we choose the threshold $\xtrehat$ such that $\hat{y}=\frac{1}{K} \lb A \log K + B\rb$, the upper bound $\tilde{D}_i(\hat{y})$ in (\ref{ublemmaexp1}) is asymptotically equivalent to $(\frac{A\log K}{cK})^{\frac{1}{d^2}}$ when the number of users $K \rightarrow \infty$, such that 
		\begin{align}
			\lim_{K \rightarrow \infty} \frac{\tilde{D}_i(\hat{y})}{(\frac{A\log K}{cK})^{\frac{1}{d^2}}}=1 \label{lemmaforalld}.
		\end{align}
	\end{lemma}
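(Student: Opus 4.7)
The plan is to substitute the prescribed $\hat y = (A\log K + B)/K$ directly into the upper bound $\tilde{D}_i(y) = (y/c)^{1/d^2} + de^{-Ky}$ and split the quotient $\tilde{D}_i(\hat y)/(A\log K/(cK))^{1/d^2}$ into its two natural summands, showing one summand tends to $1$ and the other vanishes as $K\to\infty$.

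For the polynomial summand, direct substitution gives
\[
\frac{(\hat y/c)^{1/d^2}}{(A\log K/(cK))^{1/d^2}} = \left(1 + \frac{B}{A\log K}\right)^{1/d^2}.
\]
Inspecting the definition of $B$ in (\ref{upperAB}), it is the sum of $K$-independent constants together with the contribution $-\alpha\cdot o(w(K))$ inherited from the Lambert-W asymptotic, where $w(K)$ is of order $\log K$. Hence $B = o(\log K)$, which forces this summand to tend to $1$.

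For the exponential summand, write $K\hat y = A\log K + B$ so that $e^{-K\hat y} = K^{-A}e^{-B}$. Because $A = \alpha+1 = 1/d^2$, the factor $K^{-A}$ exactly cancels the $K^{1/d^2}$ appearing in the normalisation, and the ratio reduces to
\[
\frac{de^{-K\hat y}}{(A\log K/(cK))^{1/d^2}} = \frac{dc^{1/d^2}e^{-B}}{(A\log K)^{1/d^2}}.
\]
Using the refined Lambert-W expansion $W_{-1}(z) = \log(-z) - \log(-\log(-z)) + O(1)$, the $o(w(K))$ correction in $B$ actually sits at the $\log\log K$ scale, so $e^{-B}$ grows at most polylogarithmically in $K$ and is dominated by $(\log K)^{1/d^2}$. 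This summand therefore tends to $0$. Combining the two limits gives the claimed ratio of $1$.

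The main obstacle is the ambiguous little-$o$ notation threaded through (\ref{solver0w})--(\ref{upperAB}): the same $o(w(K))$ term must simultaneously be shown to be $o(\log K)$ (so that the polynomial part of $\tilde{D}_i(\hat y)$ is the leading one) and to leave $e^{-B}$ strictly below $(\log K)^{1/d^2}$ (so that the exponential part is negligible). I would sidestep this by bypassing the generic $o$-bookkeeping and instead tracking the two leading terms $L_1 = \log(-z)$ and $L_2 = \log(-\log(-z))$ of $W_{-1}(z)$ explicitly, after which both inequalities become elementary comparisons of $\log\log K$ against $\log K$.
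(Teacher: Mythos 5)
Your proposal follows essentially the same route as the paper's proof: the identical split of the quotient into the polynomial summand $\bigl(1+\tfrac{B}{A\log K}\bigr)^{1/d^2}$ and the exponential summand, and the identical cancellation $K^{\frac{1}{d^2}-A}=1$ stemming from $A=\alpha+1=\tfrac{1}{d^2}$. The one place where you go beyond the paper is worth keeping: the paper dismisses the second summand by declaring its numerator ``a constant,'' even though $B$ contains the $K$-dependent term $-\alpha\,o(w(K))$, so $e^{-B}$ is not a priori bounded; your plan to track the two leading Lambert-W terms explicitly closes this gap. Carrying it out shows the $\log\log$ correction enters $B$ with coefficient $-\alpha=\tfrac{d^2-1}{d^2}>0$, so that $e^{-B}=O\bigl((\log K)^{-(d^2-1)/d^2}\bigr)$ in fact decays and the exponential summand vanishes. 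One caution on your intermediate wording: ``$e^{-B}$ grows at most polylogarithmically and is therefore dominated by $(\log K)^{1/d^2}$'' is not automatic --- a growth of order $(\log K)^{(d^2-1)/d^2}$ would \emph{not} be dominated by $(\log K)^{1/d^2}$ once $d\ge 2$ --- so the sign of the $\log\log K$ term in $B$ genuinely matters and must be verified exactly as your final step proposes.
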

	
	\begin{proof}
		Plugging (\ref{upperAB}) into the left hand side of (\ref{lemmaforalld}), we have
		\begin{align}
			&\lim_{K \rightarrow \infty} \frac{  (\frac{\hat{y}}{c})^{\frac{1}{d^2}}+d e^{-K\hat{y}}}{(\frac{A\log K}{cK})^{\frac{1}{d^2}}} \\
			&=\lim_{K \rightarrow \infty} \frac{ (\frac{A\log K+B}{cK})^{\frac{1}{d^2}}}{(\frac{A\log K}{cK})^{\frac{1}{d^2}}}+ \lim_{K \rightarrow \infty} \frac{d{e^{-B}}{K^{\frac{1}{d^2}-A}}}{(\frac{A\log K}{c})^{\frac{1}{d^2}}}  \label{limtbound}\\
			&=1. \nonumber
		\end{align}
		The second term of (\ref{limtbound}) equals to zero due to $\frac{1}{d^2}-A =0$, so the numerator is a constant and the denominator goes to infinity. 
		Thus, the proof is complete.
	\end{proof}
	
	
		\begin{theorem}\label{themdl1} 
			If the number of users is scaled as $K \propto P^{dd'}$, the feedback of only 1-bit per user is able to achieve the DoF $d' \in [0,d]$ per transmitter if the threshold $\xtrehat$ is chosen such that
			\begin{align}
			c{\xtrehat}^{d^2}=\frac{1}{K} \lb A \log K + B\rb.
			\end{align}
		\end{theorem}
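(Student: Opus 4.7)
The plan is to mirror the proof of Theorem \ref{themdeq1}, replacing the closed-form analysis that was only available for $d=1$ with the asymptotic upper bound $\tilde{D}_i(\hat y)$ from (\ref{ublemmaexp1}) together with the equivalence established in Lemma \ref{eqpairlemma2}. First I would start from the DoF decomposition in (\ref{dofeq}) and apply the rate-loss bound in (\ref{rloss1}), yielding
\begin{align}
d_{\rm loss} \leq \lim_{P\to\infty} \frac{d\,\log_2\!\left(1+\frac{P}{d}\,\mathbb{E}[\mcD_i^{k^\dagger}]\right)}{\log_2 P}.
\end{align}
I would then chain the two upper bounds derived in Section \ref{dl1}: first, $\mathbb{E}[\mcD_i^{k^\dagger}]\leq \tilde D_i(\hat y)$ at the threshold prescribed in the theorem (i.e.\ $c{\xtrehat}^{d^2}=\hat y=(A\log K+B)/K$ with $A=1/d^2$); and second, by Lemma \ref{eqpairlemma2}, $\tilde D_i(\hat y)$ is asymptotically equivalent to $\left(\frac{A\log K}{cK}\right)^{1/d^2}$ as $K\to\infty$.

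Because $K$ grows polynomially with $P$ while $\tilde D_i(\hat y)$ decays only poly-logarithmically in $K$, the argument of $\log_2(1+\cdot)$ diverges, so one may replace $\log_2(1+z)$ by $\log_2 z$ inside the limit. Expanding the logarithm gives a numerator of the form
\begin{align}
d\log_2 P \;-\; \frac{1}{d}\log_2 K \;+\; \frac{1}{d}\log_2\log K \;+\; O(1).
\end{align}
Inserting the user-scaling law $K\propto P^{dd'}$ yields $\frac{1}{d}\log_2 K \sim d'\log_2 P$, whereas the $\log_2\log K$ term and all constants are $o(\log P)$ and therefore vanish after normalization by $\log_2 P$. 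The net result is $d_{\rm loss}\leq d-d'$, so that $\dof_i^{k^\dagger}\geq d'$. The matching upper bound $d'\leq d$ is already encoded in the feasible range of the theorem (no stream can carry more than $d$ DoF), which closes the argument.

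The main obstacle, as in Theorem \ref{themdeq1}, is justifying that every asymptotic substitution remains legitimate inside the outer limit in $P$. Specifically, the Maclaurin expansion in (\ref{tylor1}), the binomial estimate of Lemma \ref{eqpairlemmadl1}, the truncation of the Lambert-$W$ asymptotic in (\ref{solver0w})--(\ref{upperAB}), and Lemma \ref{eqpairlemma2} itself each require $K\to\infty$; under the hypothesis $K\propto P^{dd'}$ with $d'>0$ this is automatic when $P\to\infty$. The verification then reduces to checking that each discarded correction (the $o(w(K))$ term, the $O(1/K)$ from the binomial coefficients, and the $B$ term in $\hat y$) contributes only $o(\log P)$ to the numerator and is therefore invisible at the DoF scale. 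The boundary case $d'=0$ is handled trivially since $d_{\rm loss}\leq d$ is automatic, so the claim holds on the full interval $d'\in[0,d]$.
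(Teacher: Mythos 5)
Your proposal is correct and follows essentially the same route as the paper's own proof: bound $\mathbb{E}[\mcD_i^{k^\dagger}]$ by $\tilde D_i(\hat y)$ from (\ref{ublemmaexp1}), invoke the asymptotic equivalence of Lemma \ref{eqpairlemma2}, substitute $K\propto P^{dd'}$, and observe that the $\log\log K$ and constant terms vanish at the DoF scale. The only nitpick is that the rate-loss bound you want is the one in (\ref{rloss}) (Lemma 6 of the cited reference, applied to the user $k^\dagger$) rather than (\ref{rloss1}), which is specialized to conventional OIA with $Q(K)$; your subsequent display already uses the correct form.
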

	
	\begin{proof}
		The proof is similar to the proof of Theorem \ref{themdeq1}. The achievable DoF of transmitter $i$ using OIA can be expressed as $d-d_{\rm loss}$. If $K \propto P^{dd'}$, the DoF loss term can be written as 
		\begin{align}
			d_{\rm loss}&=d \cdot \lim_{P\rightarrow\infty} \frac{{\mathbb E}[{R_{\rm loss}}_{i}^{k^{\dagger}}]}{\log_2 P} \nonumber\\
			& \leq d \cdot \lim_{P\rightarrow\infty} \frac{{\log}_2 \left(1+\frac{P}{d}\tilde{D}_i(\hat{y})\right)}{\log_2 P} \label{dofubdl1}\\
			& = d \cdot \lim_{P\rightarrow\infty} \frac{{\log}_2 \left(1+\frac{P}{d} \lb\frac{A\log K}{cK} \rb^{\frac{1}{d^2}}\right)}{\log_2 P} \label{dofubdl2}\\
			&= d \cdot \lim_{P\rightarrow\infty} \frac{{\log}_2 \left(\frac{P}{dK^{\frac{1}{d^2}}}\right)+{\frac{1}{d^2}}{\log}_2 \lb\frac{A \log K}{c} \rb}{{\log}_2 P} \label{eqpairdl1}\\
			&=(d-d')+\lim_{P\rightarrow\infty} \frac{1}{{\log} P + O(1)} \label{eqpair1dl1}\\
			&=(d-d').
		\end{align}
		The inequality (\ref{dofubdl1}) is obtained by using the upper bound of (\ref{ublemmaexp1}). Equality (\ref{dofubdl2}) follows from the asymptotic equivalence proved in Lemma \ref{eqpairlemma2}. Equality (\ref{eqpair1dl1}) is obtained using the relationship $K \propto P^{dd'}$ and the L'Hôpital's rule. Therefore, DoF $d'$ can be achieved at each transmitter.
	\end{proof}
	
	\begin{remark} \label{solremark}
		The achieved DoF is independent of the specific value of $B$. Therefore, theorem \ref{themdl1} is valid for all $B \in \mathbb{R}$. 
		For $d=1$, the optimal threshold obtained in (\ref{opttre}) is a special case of the above result $\xtrehat=\hat{y}=\frac{1}{K} \lb A \log K + B\rb $ when $A=1$. The asymptotic equivalence can be shown as follows
		\begin{align}
			\lim_{K \rightarrow \infty} \frac{\frac{1}{K} \lb  \log K + B\rb}{1-\lb\frac{1}{K}\rb^{\frac{1}{K-1}}}
			&=\lim_{M \rightarrow 0} \frac{-M \log M}{1-M^M} \label{solutioneq1}\\
			&=\lim_{M \rightarrow 0} \frac{1}{M^M} \label{solutioneq2}\\
			&=1 \nonumber
		\end{align}
		where $M=\frac{1}{K}$ replaces $K$ for simplicity. Equality (\ref{solutioneq2}) follows from the L'Hôpital's rule.
	\end{remark}
	
	\begin{theorem} \label{corollary1}
		When the transmit power is a finite value and the number of users tends to infinity i.e.~ $P=O(1)$ and $K\rightarrow\infty$, OIA with 1-bit feedback and OIA with perfect real-valued feedback achieve the same rate.
	\end{theorem}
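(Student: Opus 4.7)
The plan is to use the rate decomposition in (\ref{rategainloss}), $R_i^k = R_{\mathrm{gain},i}^k - R_{\mathrm{loss},i}^k$, for both the 1-bit scheme (selected user $k^{\dagger}$) and the real-valued scheme (selected user $k^{\ast}$), and then to prove two facts separately: (i) the expected rate gains agree between the two schemes; (ii) both expected rate losses vanish as $K\to\infty$ at fixed $P$. Combining (i) and (ii) immediately yields $\mathbb{E}[R_i^{k^{\dagger}}] \to \mathbb{E}[R_i^{k^{\ast}}]$, i.e., the same achievable rate.

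For step (i), observe that in either scheme the identity of the selected user is a (possibly randomised) function of the chordal distances $\{\mcD_i^k\}_{k=1}^K$, and hence of $\{\mathbf{H}_{i,p}^k,\mathbf{H}_{i,q}^k\}_{k=1}^K$ only. By the i.i.d.\ complex Gaussian assumption on all channels, these interference channels are independent of the desired channels $\{\mathbf{H}_{i,i}^k\}_{k=1}^K$; consequently $\mathbf{H}_{i,i}^{k^{\star}}$ (where $k^{\star}\in\{k^{\ast},k^{\dagger}\}$) is still i.i.d.\ Gaussian and is independent of the postfilter $\mathbf{U}_i^{k^{\star}}$, which is itself a $2d\times d$ isometry built from $\mathbf{R}_i^{k^{\star}}$. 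By rotational invariance of the Gaussian law, ${\mathbf{U}_i^{k^{\star}}}^{\mathrm{H}}\mathbf{H}_{i,i}^{k^{\star}}$ is then a $d\times d$ matrix of i.i.d.\ $\mathcal{CN}(0,1)$ entries whose distribution is independent of the selection rule. Hence $\mathbb{E}[R_{\mathrm{gain},i}^{k^{\dagger}}] = \mathbb{E}[R_{\mathrm{gain},i}^{k^{\ast}}]$, a common finite constant for $P=O(1)$.

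For step (ii), the real-valued scheme obeys (\ref{rloss1}) with $Q(K)=O(K^{-1/d^2})\to 0$ from the Grassmannian quantization bound recalled in Section~\ref{TOIA}, so $\mathbb{E}[R_{\mathrm{loss},i}^{k^{\ast}}]\to 0$. The 1-bit scheme with the threshold prescribed by Theorem~\ref{themdl1} satisfies $\mathbb{E}[\mcD_i^{k^{\dagger}}] \le \tilde{D}_i(\hat{y})$ by (\ref{ublemmaexp1}), and Lemma~\ref{eqpairlemma2} gives $\tilde{D}_i(\hat{y})=O\bigl((K^{-1}\log K)^{1/d^2}\bigr)\to 0$. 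Plugging either quantity into the uniform upper bound $\mathbb{E}[R_{\mathrm{loss}}]\le d\cdot\log_2\bigl(1+\tfrac{P}{d}\,\mathbb{E}[\mcD]\bigr)$ from (\ref{rloss}) and invoking the continuity of $\log_2(1+\cdot)$ at the origin (valid because $P$ is a fixed, finite constant) gives $\mathbb{E}[R_{\mathrm{loss},i}^{k^{\dagger}}]\to 0$ as well.

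The hard part is step (i): one must ensure that \emph{randomly} picking among the users who reported a `1' under the 1-bit scheme does not couple the selection with the desired channel $\mathbf{H}_{i,i}$. The resolution is that the random tie-break is only among the index set $\{k:\mcD_i^k<\xtre\}$, which is itself a deterministic function of the interference channels alone; conditioning on this index set therefore preserves independence from $\{\mathbf{H}_{i,i}^k\}$ and the uniform random choice within it is done with an exogenous coin. Once this is settled, step (ii) is a direct application of the two already-established upper bounds on $\mathbb{E}[\mcD]$ and requires no further work.
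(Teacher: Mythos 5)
Your proposal is correct and follows essentially the same route as the paper: both arguments rest on (a) the independence of the postfilter $\mathbf{U}_i^{k^{\star}}$ (and of the selection event, which is a function of the interference channels and an exogenous coin only) from the desired channel $\mathbf{H}_{i,i}^{k^{\star}}$, so that the gain term equals the interference-free $d\times d$ ergodic capacity for either selection rule, and (b) the vanishing of the rate-loss term at fixed $P$ via the bound in (\ref{rloss}) together with $\tilde{D}_i(\hat{y})\to 0$ from Lemma~\ref{eqpairlemma2}. The only cosmetic difference is that you re-derive the real-valued side from $Q(K)\to 0$ whereas the paper cites \cite{Lee2013b} for that half; your explicit handling of the randomized tie-break (and, implicitly, the outage event) is a point the paper glosses over but does not change the argument.
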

	\begin{proof}		
When $P=O(1)$ and $K\rightarrow\infty$, the achievable rate of OIA with perfect real-valued feedback becomes the ergodic capacity of the $d \times d$ point-to-point MIMO system without interference \cite{Lee2013b}. To complete our proof, we just need to show that OIA with 1-bit feedback achieves the same ergodic capacity of the $d \times d$ point-to-point MIMO system without interference. Therefore, we proof as follows.

When $K\rightarrow\infty$, the rate loss in (\ref{rloss}) can be written as 
\begin{align}
	{\mathbb E}[{R_{\rm loss}}_{i}^{k^{\dagger}}] \leq&  d \cdot{\rm{log}}_2 \left(1+\frac{P}{d} \cdot {{\tilde{D}_{i} (y)}} \right)
\end{align}
using the upper bound obtained in (\ref{ublemmaexp1}). If we choose the threshold $\xtrehat$ such that $\hat{y}=\frac{1}{K} \lb A \log K + B\rb$, we have
\begin{align}
\lim_{K \rightarrow \infty} \tilde{D}_i(\hat{y}) &= \lim_{K \rightarrow \infty} \left(\frac{A\log K}{cK}\right)^{\frac{1}{d^2}} \label{corollary11}\\
&=\lb \lim_{K \rightarrow \infty} \frac{A}{cK}  \rb ^{\frac{1}{d^2}} \label{corollary12}\\
&=0 \nonumber
\end{align}
where (\ref{corollary11}) follows from lemma~\ref{eqpairlemma2} and (\ref{corollary12}) is due to L'Hôpital's rule. Correspondingly, the rate loss term ${\mathbb E}[{R_{\rm loss}}_{i}^{k^{\dagger}}]$ goes to zero due to finite $P$. 
Therefore, when the number of users $K\rightarrow\infty$, we can see from (\ref{rategainloss}) that OIA with 1-bit feedback achieves the interference-free rate at the selected user, i.e.~
\begin{align}
{\mathbb E}[R_{i}^{\kdragger}]&= {\mathbb E}\Bigg[{\rm{log}}_2 {\rm{det}} \Big( \mathbf{I}+\underbrace{ {\mathbf{U}_i^{\kdragger}}^{\rm{H}} \bfH_{i,i}^{\kdragger} }_{\bar{\bfH}_{i,i}^{\kdragger}}
\underbrace{ {{\bfH}_{i,i}^{\kdragger}}^{\rmH}{\mathbf{U}_i^{\kdragger}}}_{{\bar{\bfH}}_{i,i}^{\kdragger\rmH}} \Big) \Bigg] \label{ergorate} 
\end{align}
where $\bar{\mathbf{H}}_{i,i}^{\kdragger}={\mathbf{U}_i^{\kdragger}}^{\rm{H}} \bfH_{i,i}^{\kdragger}$ is a ${d \times d}$ matrix. Every element of $\bar{\mathbf{H}}_{i,i}^{\kdragger}$ is an i.i.d. symmetric complex Gaussian random variable with zero mean and unit variance. This is due to the fact that the $\NR \times d$ truncated unitary matrix $\mathbf{U}_i^{\kdragger}$ is independent on ${\mathbf{H}}_{i,i}^{\kdragger}$. Therefore, the rate achieved in (\ref{ergorate}) becomes the ergodic capacity of the $d \times d$ point-to-point MIMO system. This also completes our proof.
\end{proof}

\section{Simulation Results}  \label{sec5}
In this section, we provide numerical results of the sum rate and the threshold choices of OIA using 1-bit feedback. 

Fig.~\ref{sumrate_OIA-1bit} shows the achievable sum rate versus SNR of  OIA with perfect real-valued feedback and OIA with 1-bit feedback, for $N_{\rm{R}}=2$, $d=1$ and the number of users $K = \lceil P \rceil$. We include also the sum rate achieved by closed-form IA in a 3-user $2\times 2$ MIMO interference channel. The threshold of our feedback scheme is calculated according to (\ref{opttre}). We can see that OIA with 1-bit feedback achieves a slightly lower rate than OIA with perfect feedback. At $30\,$dB SNR, it can achieve $90\%$ of the sum rate obtained by perfect feedback OIA. Importantly, OIA with 1-bit feedback is able to capture the slope and achieve the DoF $d=1$ (see the reference line in Fig.~\ref{sumrate_OIA-1bit}). 

The feedback mechanism can be designed in a way where any user whose distance measure is above the prescribed threshold will stay silent, and only eligible users will attempt to feedback \cite{Tang2005}. In such a mechanism, since only the eligible users feed back information, the feedback must consist of user identity and be performed on a shared random access channel, e.g., using a contention-based approach \cite{Tang2005}. It should be noted that any feedback information cannot be decoded when more than two users collide simultaneously using the same feedback resource. Therefore, the number of users that compete for the same feedback resource will have an impact on the successful transmission of the feedback information. We can establish the average number of eligible users as follows 
\begin{align}
N_{\rm bits}=KF_\mcD\left(\xtre\right).
\end{align}
Fig.~\ref{fbbits_OIA-1bit} also shows the number of eligible users per cell when the total number of users $K = \lceil P \rceil$. It can be seen that the average number of eligible users is almost a linear function with SNR (in dB) and the average number of eligible users at $30\,$dB is less than $1\%$ of the total number of users. Therefore, the small number of eligible users may ease the design of a contention-based feedback protocol.

Fig.~\ref{asm_tre_deq1} compares the threshold as a function of the number of users $K$ for $N_{\rm{R}}=4$, $d=2$. The thresholds are obtained by numerical minimization of (\ref{upeq41}), (\ref{solutionlambert}) with $\zeta=-1$ and the asymptotic expression  $\frac{A\log K}{K}$ as mentioned in {\em Remark} \ref{solremark}. The thresholds obtained by the numerical approach and by (\ref{solutionlambert}) are very close, even for a small number of users $K$. The asymptotic threshold $\frac{A\log K}{K}$ is smaller than the others since we neglect $B$ in (\ref{upperAB}). However, $B$ has no impact on the achieved DoF as explained in {\em Remark} \ref{solremark}. It can be seen that these thresholds are getting closer to each other as $K$ increases. These results validate the calculation of the thresholds.

\begin{figure}[t]
	\centering
		\centering 
		\includegraphics[width=.6\columnwidth]{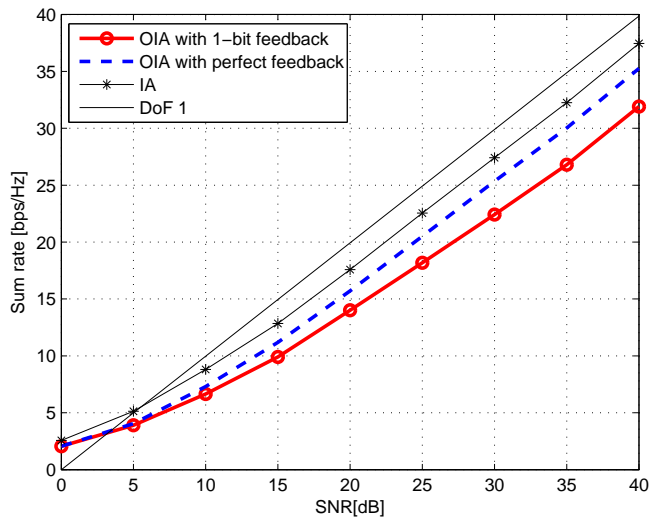}
		\caption{Achievable sum rate for $N_{\rm{R}}=2$, $d=1$. The number of users $K = \lceil P \rceil$ for OIA.}
		\label{sumrate_OIA-1bit}
\end{figure}
\begin{figure}[t]
		\centering 
		\includegraphics[width=.6\columnwidth]{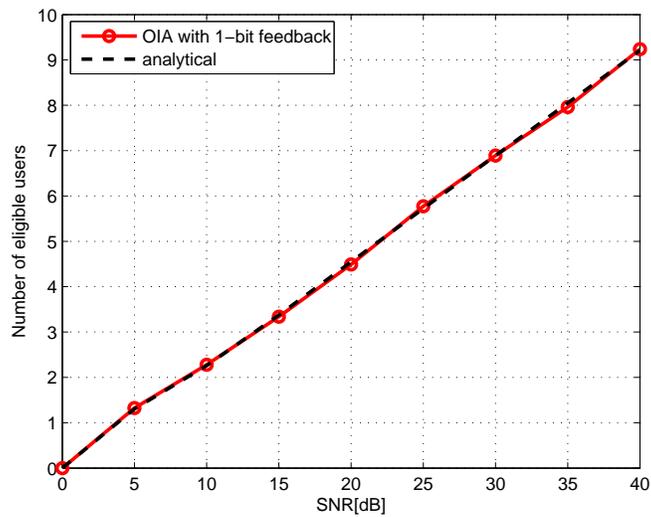}
		\caption{The average number of eligible users for $N_{\rm{R}}=2$, $d=1$ and $K = \lceil P \rceil$.}
		\label{fbbits_OIA-1bit}
\end{figure}

\begin{figure}[t]
	\centering
	\includegraphics[width=.6\columnwidth]	{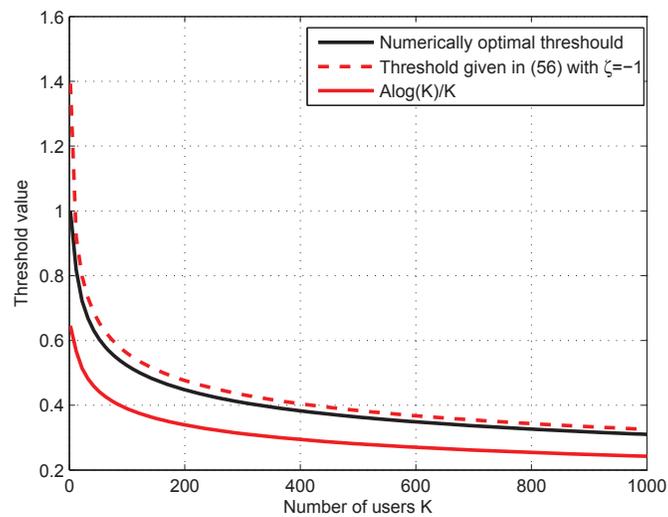}
	\caption{Comparison of the threshold obtained by numerical minimization of (\ref{upeq41}), (\ref{solutionlambert}) and the asymptotic solution $\frac{A\log K}{K}$ for $\NR=4, d=2$.}
	\label{asm_tre_deq1}
\end{figure}

Fig.~\ref{sumrate_OIA-d42} presents the sum rate versus SNR of OIA with perfect feedback and OIA with 1-bit feedback, for $N_{\rm{R}}=4$, $d=2$ and the number of users $K \in \{10,50,100\}$. The number of users does not scale with SNR, thus the sum rates saturate as SNR increases. With the increase of number of users, a higher rate is achieved. Importantly, 1-bit feedback promises about $90\%$ of the rate achieved by OIA with perfect feedback.

\begin{figure}[t]
	\centering
	\centering 
	\includegraphics[width=.6\columnwidth]{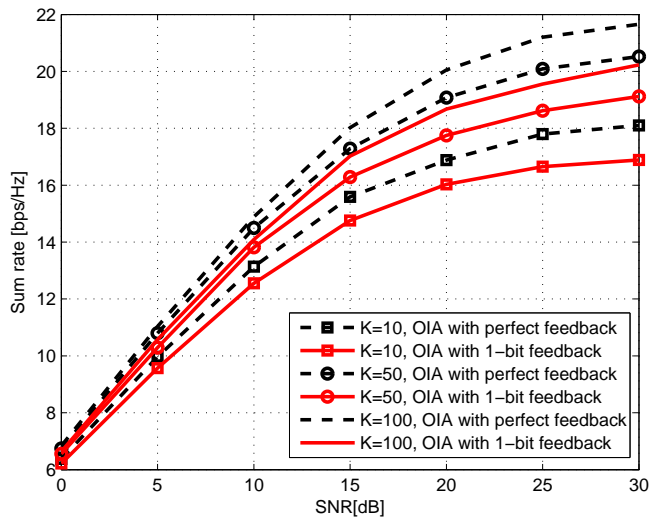}
	\caption{Achievable sum rate for $N_{\rm{R}}=4$, $d=2$. The number of users $K \in \{10,50,100\}$ (for the curves from bottom up).}
	\label{sumrate_OIA-d42}
\end{figure}

\section{Comparison OIA and IA with Limited Feedback } \label{sec6}
OIA achieves interference alignment by proper user selection. With the help of our proposed 1-bit quantizer, each user feeds back just 1 bit. Therefore, the relationship between the number of users and the amount of feedback can be established. On the other hand, IA requires CSI feedback at the transmitters to align the interference signals. The CSI is usually obtained by channel quantization on the Grassmannian manifold, where the index of the selected codeword is fed back to the transmitters. Due to the fact that the capacity of the feedback channel is usually very limited, it would be interesting to have a comparison of OIA and IA using the same amount of feedback. The work in \cite{Leithon2012} partially addressed this issue and compared the performance OIA and limited feedback IA. However, a comparison under the same amount of feedback has not been done since no limited feedback scheme was proposed  by prior works for OIA to the best of our knowledge. In this section, we will present the comparison in terms of complexity and achievable rate.

\subsection{IA with Limited Feedback} 
In this section, we review the IA limited feedback scheme proposed in \cite{Krishnamachari2013}. According to \cite{Krishnamachari2013}, receiver $i$ forms and feeds back an aggregated channel matrix $\bfW_i \in \mathbb{C}^{N_{\rm R} N_{\rm T} \times 2}$ as
\begin{align}
\bfW_i = \left[\bfw_{i,1}, \bfw_{i,2}\right].
\end{align}
The unit-norm vectors $\bfw_{i,1}, \bfw_{i,2} \in \mathbb{C}^{N_{\rm R} N_{\rm T} \times 1} $ are obtained by vectorizing the elements of matrices $\bfH_{i,p}$ and $\bfH_{i,q}$, i.e.~
\begin{align}
\bfw_{i,1}=\frac{\rmvec\lb\bfH_{i,p}\rb}{\left\| \rmvec\lb\bfH_{i,p}\rb \right\| }, \quad \bfw_{i,2}=\frac{\rmvec\lb\bfH_{i,q}\rb}{\left\| \rmvec\lb\bfH_{i,q}\rb \right\| }
\end{align}
where $p=(i+1 \mod 3)$ and $q=(i+2 \mod 3)$ are the indices of two interfering transmitters. 
Using the concept of composite Grassmannian manifold, the matrix $\bfW_i$ can be quantized using a codebook $\mathcal{C}$ with $2^{N_{\rm bits}}$ codewords and $N_{\rm bits}$ is the number of feedback bits. Each codeword $\mathbf{C}_j=[\bfc_{j,1},\bfc_{j,2}] \in \mathcal{C}$ is a $N_{\rm R} N_{\rm T} \times 2$ matrix with $\|\bfc_{j,1}\|=\|\bfc_{j,2}\|=1$. The squared distance between $\mathbf{C}_j$ and $\bfW_i$ is defined as 
\begin{align}
d_{\rm s}\lb\bfW_{i},\bfC_{j}\rb=d_{\rm c}^2\lb\bfw_{i,1},\bfc_{j,1}\rb+d_{\rm c}^2\lb\bfw_{i,2},\bfc_{j,2}\rb,
\end{align}
which is a commonly used distance measure on the composite Grassmannian manifold. The receiver $i$ calculates the squared distance $d_{\rm s}$ between $\bfW_{i}$ and every codeword in the codebook $\mathcal{C}$ and feeds back the index of the codeword which minimizes the squared distance. Based on the feedback indices from the receiver, the transmitters can obtain the quantized version of channel matrices $\bfH_{i,j},\forall i\neq j$. Then, IA precoders and decoders can be calculated according to the quantized channel matrices.

\subsection{Complexity Analysis} 
In this section, we quantify and compare the computational complexity of OIA and IA in terms of number of floating point operations (FLOPs). We will pay particular attention to the quantization process. 

One FLOP is one floating point operation, which corresponds to a real addition, multiplication, or division \cite{Golub1996}. A complex addition and multiplication require 2 FLOPs and 6 FLOPs, respectively. For a complex-valued matrix $\mathbf{A} \in \mathbb{C} ^{M \times N}$ $(M\geq N)$, the FLOP counts, denoted by $\Xi$, of some basic matrix operations are given as follows. 

\begin{itemize}
	\item Frobenius norm of $\|\bfA\|_{\rm F}$:  $\Xi_{\rm F}(M,N)=4MN$ 
	\item Gram-Schmidt orthogonalization (GSO) of $\bfA$: $\Xi_{\rm GSO}(M,N)=8N^2M-2MN$
	\item Matrix multiplication of $\bfA\bfA^{\rmH}$: $\Xi_{\otimes}(M,N)=8N^2M-2MN$
\end{itemize}

For OIA, each user needs to calculate the chordal distance between two $\NR \times d$ interference channels. According to (\ref{defdc}), the calculation of the chordal distance requires two GSOs to calculate the orthonormal bases of the two interference channels, two matrix multiplications of the truncated unitary matrices, a matrix addition of two truncated unitary matrices and a Frobenius norm operation. We ignore the scalar operations. Therefore, the total FLOPs per cell are counted as
\begin{align}
&\Xi_{\rm OIA-1bit} \nonumber\\
&=N_{\rm bits}( 2\Xi_{\rm GSO}(\NR,d)+2\Xi_{\otimes}(\NR,d)+2\NR d+\Xi_{\rm F}(\NR,d)) \nonumber\\
&=N_{\rm bits}(32{N_{\rm{R}}}d^{2} - 2\NR d).
\end{align}
where $N_{\rm bits}=K$ is the number of feedback bits since each user feeds back 1 bit. 

For IA with limited feedback, the squared distance is used for the selection of the quantized channel matrix. Thus, $2^B$ squared distance calculations will be performed in order to find the codeword. The squared distance calculates twice the chordal distance between two $\NR\NT \times 1$ vectors. Therefore, the total FLOP counts are given by
\begin{align}
\Xi_{\rm IA-joint}=2^{N_{\rm bits}}(64\NR\NT-4\NR\NT).
\end{align}

Since the joint quantization over the composite Grassmannian manifold yields a high complexity for decoding, then the quantizations of $\bfw_{i,1}$ and $\bfw_{i,1}$ over individual Grassmannian manifold $\mathcal{G}_{\NR\NT,1}(\mathbb{C})$ could be used to reduce the complexity at the expense of lower quantization resolution. Assuming equal division of the total $B$ quantization bits, the total FLOP counts of individual quantization are given by
\begin{align}
	\Xi_{\rm IA-indv}=2^{\frac{N_{\rm bits}}{2}}(64\NR\NT-4\NR\NT).
\end{align}

\begin{figure}[t]
	\centering 
	\includegraphics[width=.6\columnwidth]{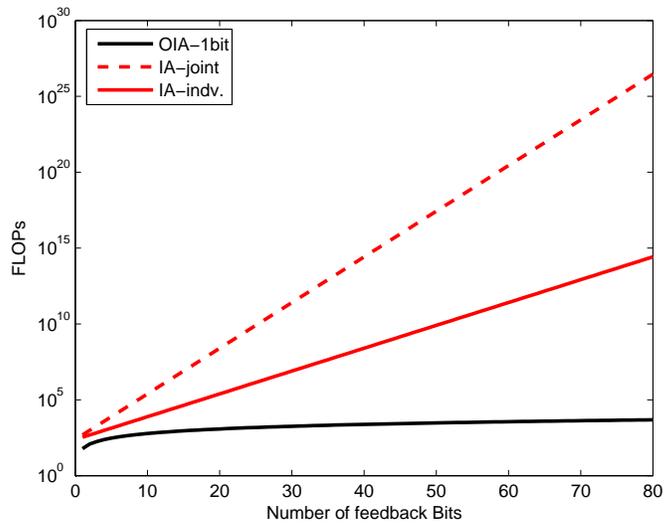}
	\caption{Feedback Complexity per cell of OIA and IA for $\NR=2$, $d=1$ ($\NT=2$ for IA). The FLOP counts of OIA sum over all $N_{\rm bits}=K$ users in a cell. }
	\label{Complexity}
\end{figure}

\begin{figure}[t]
	\centering 
	\includegraphics[width=.6\columnwidth]{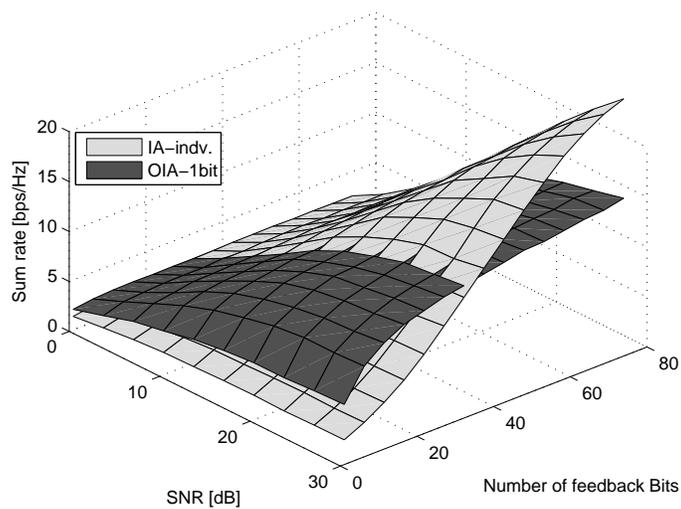}
	\caption{Sum rate for $N_{\rm{R}}=2$, $d=1$ at varying SNRs and the numbers of feedback bits per cell.}
	\label{OIA&IA_3D}
\end{figure}

The computational complexity of OIA and IA versus the number of feedback bits is given in Fig.~\ref{Complexity}. The codebook for IA with joint quantization contains $2^{N_{\rm bits}}$ codewords, which results in an exponentially increased FLOP counts. Individual quantization reduces the exponent to $\frac{N_{\rm bits}}{2}$. On the contrary, the complexity of OIA increases linearly with $N_{\rm bits}$.

Fig.~\ref{OIA&IA_3D} presents the sum rate of OIA with 1-bit feedback and IA with individual quantization. To satisfy the feasibility condition, we choose $\NT=2$ for IA. The codewords for IA are generated through random vector quantization (RVQ). In order to enable the performance analysis with exponentially growing codebook, we replace the RVQ process by a statistical model of the quantization error using random perturbations \cite[Sec.~VI.B]{Rezaee2012a}, which has shown to be a good approximation of the quantization error using RVQ. It can be observed that OIA outperforms IA when the amount of feedback is lower than 30 bits and the rate difference increases with SNR. This is due to the fact that the IA algorithm is highly sensitive to the imperfection of CSI, thus leading to a significant rate loss. At $20$~dB SNR with 10 feedback bits per cell, it can be observed that OIA compared to IA increases the sum rate by 100\% while reducing the computational complexity by more than one order of magnitude. When the number of feedback bits is larger than 30, IA starts to outperform taking advantage of the accurate CSI provided by the exponentially increased codebook size. However, the performance improvement of IA also comes with an exponentially increased computational complexity and storage, which poses a strong practical limit. From an implementation point of view, OIA with 1-bit feedback provides a better performance in the favorable operation region and enjoys a much lower complexity.

\section{Conclusion} \label{sec7}
In this paper, we analyzed the achievable DoF using a 1-bit quantizer for OIA. We proved that 1-bit feedback is sufficient to achieve the optimal DoF of $d$ in 3-cell MIMO interference channels. Most importantly, the required user scaling law remains the same as for OIA with perfect real-valued feedback. We derived a closed-form threshold for $d=1$. In the case of $d >1$, an asymptotic threshold choice was given, which is optimal when the number of users $K \rightarrow \infty$. We compared OIA and IA with the same amount of feedback and present the comparison in terms of complexity and achievable rate. At 20dB SNR with 10 bits feedback per cell for both, OIA and IA, we demonstrated that OIA reduces the complexity by more than one order of magnitude while increasing the sum rate by a factor of 2.

\bibliographystyle{IEEEtran}
\begin{footnotesize}
\bibliography{IEEEabrv,E:/Dropbox/ZhinanBIB/library}
\end{footnotesize}
	
\end{document}